\title{\LARGE \bf
CNN-based End-to-End Adaptive Controller with Stability Guarantees
}
\author{Myeongseok Ryu$^{1}$ and Kyunghwan Choi$^{1}$%
\thanks{*This work was supported by Korea Institute for Advancement of Technology (KIAT) grant funded by the Korea Government (MOTIE) (P0020535, The Competency Development Program for Industry Specialist)}%
\thanks{$^{1}$Myeongseok Ryu and Kyunghwan Choi are with the School of Mechanical Engineering, 
        Gwangju Institute of Science and Technology, Gwangju, 61005, Republic of Korea
        {\tt\small dding\_98@gm.gist.ac.kr, khchoi@gist.ac.kr}}%
}
\begin{document}

\newtheorem{theorem}{Theorem}
\newtheorem{remark}{Remark}
\newtheorem{assum}{Assmption}

\maketitle
\thispagestyle{empty}
\pagestyle{empty}

\begin{abstract}

This letter proposes a convolutional neural network (CNN)-based adaptive controller wtih three notable features: 1) it determines control input directly from historical sensor data (in an end-to-end process); 2) it learns the desired control policy during real-time implementation without using a pretrained network (in an online adaptive manner); and 3) the asymptotic tracking error convergence is proven during the learning process (to deliver a stability guarantee). An adaptive law for learning the desired control policy is derived using the gradient descent optimization method, and its stability is analyzed based on the Lyapunov approach. A simulation study using a control-affine nonlinear system demonstrated that the proposed controller exhibits these features, and its performance can be tuned by manipulating the design parameters. In addition, it is shown that the proposed controller has a superior tracking performance to that of a deep neural network (DNN)-based adaptive controller.

\end{abstract}

\section{Introduction}

Neural networks (NNs) have been widely used in control applications as a function approximator for adaptive control \cite{BookEKcontrol}, state estimation \cite{BoockEKestimation}, and so on. 
These studies provided the ultimate boundedness of tracking or estimation errors based on the Lyapunov-based stability analysis, but were limited to investigating shallow NNs with one hidden layer.

Deep neural networks (DNNs) with multiple hidden layers are more expressive and can provide better performance than shallow NNs. However, due to the difficulty in deriving an adaptation law ensuring stability, designing a stable DNN-based controller is generally considered challenging. There have been a few attempts at overcoming this challenge, including developing Lyapunov-based adaptation laws for controllers based on DNNs or theirs variations \cite{DixonDNN,DixonLSTM,DixonPINN}. A Lyapunov-based adaptation law for a DNN-based controller was derived in \cite{DixonDNN}, which ensured asymptotic convergence of tracking error. \cite{DixonLSTM} utilized long short-term memory (LSTM), a type of DNN, in designing an adaptive controller and presented the ultimate boundedness of tracking error based on Lyapunov analysis.
The Lyapunov-based approach was extended to using physics-informed LSTM for an adaptive controller and therein demonstrated its asymptotic stability \cite{DixonPINN}.

While DNN-based adaptive control with a stability guarantee has been studied by pioneering works \cite{DixonDNN,DixonLSTM,DixonPINN}, convolutional neural networks (CNNs), which are well known for their spatial feature capturing ability, have not been as actively investigated in control applications, as they have been in computer vision applications.
Nonetheless, motivated by the feature-capturing capability, the use of CNNs as a basis for end-to-end controllers has been studied \cite{CNNimg2strOutErr,CNNimg2strAngErr,CNNmat2mat,CNNsensorSystemCompare}. These end-to-end controllers determine the control input directly from the sensor data with features that are extracted by CNNs. In \cite{CNNimg2strOutErr,CNNimg2strAngErr}, CNN-based end-to-end controllers were trained offline to produce the desired steering behavior based on 2D images from the camera. \cite{CNNmat2mat,CNNsensorSystemCompare} generated pseudo 2D images by stacking historical system input or output data and used them to train CNN-based end-to-end controllers offline to behave as target controllers.
However, none of the works above considered online adaptation or stability analysis of CNN-based end-to-end controllers. 

This letter presents a CNN-based end-to-end adaptive controller with a stability guarantee.
The proposed controller uses 2D images generated by stacking historical sensory data as an input matrix to convolutional layers (CVLs). The output of CVLs is input to the following fully-connected layers (FCLs), which produce the controller output. An adaptation law of network weights are implemented to learn the desired control policy and is derived using the gradient descent optimization method. 
The stability of the adaptation laws is proven based on the Lyapunov analysis by showing that the tracking error is asymptotically convergent and the network weights and biases are bounded. 
A part of the proposed controller is designed based on these findings, such as Jacobians with respect to FCLs from \cite{DixonDNN} and stabilizing techniques for the adaptation law from \cite{BoockEKestimation,BookEKcontrol}; however, the remainder of the proposed controller, particullary the components rely on the mathematical formulation and adaptation law of the overall CNN architecture, has been solely developed by the present authors.
A simulation study using a control-affine nonlinear system is performed to demonstrate that the proposed CNN-based end-to-end controller ensures asymptotic convergence of tracking error during online adaptation without using a pretrained network. 
In addition, the tracking performance of the proposed controller is compared with that of the DNN-based adaptive controller, which is a form in which the CVLs are not included in the proposed controller.
\begin{figure*}[!t]
    \centering
    \includegraphics[width=2.2\columnwidth]{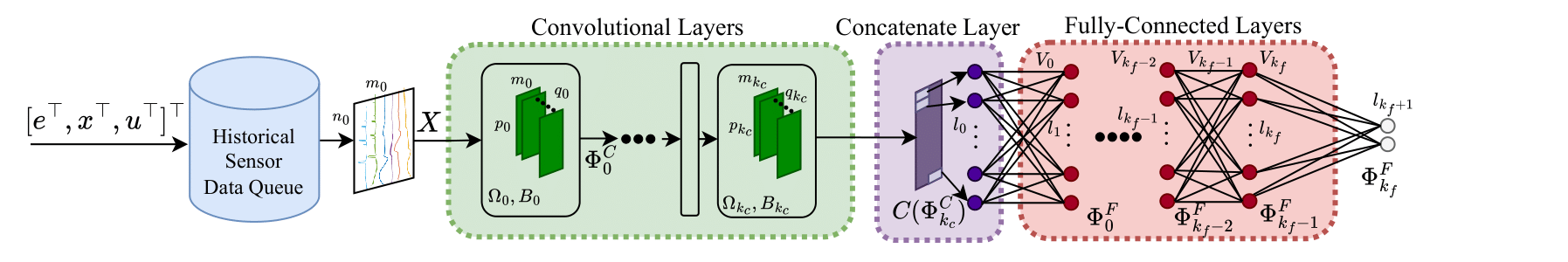}
    \caption{CNN architecture used in the proposed controller. 
    } 
    \label{fig: CNN structure}
\end{figure*}
\section{Problem Formulation}

In this section, the following notation is defined and then the problem formulation is presented.

\begin{itemize}
    \item $\odot$ and $\otimes$ denote the Hadamard and Kronecker products, respectively \cite{BookMatrix}.
    \item $x_{(i)}$ denotes the $i\textsuperscript{th}$ element of vector $x$, and $x_{(i:j)}\triangleq [x_{(i)},x_{(i+1)},\dots,x_{(j)}]$ where $i \le j$.    
    \item For $x\in\mathbb{R}^{pq}$ and $pq=nm$, $\text{reshape}(x, n, m) \triangleq[x_{(1)},\dots,x_{(n)};x_{(n+1)},\cdots,x_{(2n)};x_{(2n+1)},\cdots,x_{(nm)}]^\top$.    
    \item $A_{(i,j)}$ is the $i\textsuperscript{th}$ row-$j\textsuperscript{th}$ column element of matrix $A$.
    \item $\text{vec}(A)\triangleq [A_{(1,1)},\dots,A_{(1,m)},A_{(2,1)},\cdots,A_{(n,m)}]^\top  $ for $A\in\mathbb{R}^{n\times m}$.
    \item $\text{row}_i(A) \triangleq [A_{(i,1)},A_{(i,2)},\cdots,A_{(i,m)}]$ denotes the $i\textsuperscript{th}$ row of $A\in\mathbb{R}^{n\times m}$.
    \item $\text{row}_{(i:j)} (A) \triangleq [\text{row}_i(A)^\top  , \text{row}_{i+1}(A)^\top  ,\dots,\text{row}_{j}(A)^\top  ]^\top  $ denotes a matrix consisting of rows $i$ to $j$ of $A\in\mathbb{R}^{n\times m}$.
    \item $\overset{\curvearrowleft}\prod^{n}_{l=k} A_l \triangleq \left\{ {\begin{array}{*{20}{c}}
{{A_n}{A_{n - 1}} \cdots {A_k},\,\,{\rm{if}}\,\,n \ge k}\\
{1,\,\,\,\,\,\,\,\,\,\,\,\,\,\,\,\,\,\,\,\,\,\,\,\,\,\,\;\;\;\;\;\;\;{\rm{if}}\,\,n < k}
\end{array}} \right.$, where $k$ and $n$ are positive integers and $A_l$ is a matrix.
\end{itemize}

Consider a control-affine nonlinear system modeled as 
\begin{equation}
    \dot x = f(x) + u,
    \label{eq: model dynamics}
\end{equation}
where $x\in \mathbb{R}^n$ denotes the state, $u\in\mathbb{R}^n$ denotes the control input, and $f:\mathbb{R}^n\to\mathbb{R}^n$ denotes an unknown smooth nonlinear function. System \eqref{eq: model dynamics} can be reformulated as 
\begin{equation}
    \dot x = A_c x + f_c(x) + u,
    \label{eq: reformed dynamics}
\end{equation}
where $f_c(x)= f(x) - A_c x$ and $A_c\in\mathbb{R}^{n\times n}$ is a Hurwitz designer matrix. 

The desired control policy $u^*$ is defined by feedback linearization control with the knowledge of function $f(x)$ as follows:
\begin{equation}
    u^* = -f_c(x) + \dot x_d - A_c x_d - k_s \text{sgn}(e),
    \label{eq: desired input}
\end{equation}
where $x_d\in\mathbb{R}^n$ is the desired state trajectory, which is continuously differentiable; $e\triangleq x-x_d$ is the tracking error; and $k_s\in\mathbb{R}_{>0}$ is the control gain.
The desired control policy provides the following desired error dynamics when applied to the system \eqref{eq: reformed dynamics}: 
\begin{equation}\label{eq: ideal error dynamics}
    \dot e = A_c e-k_s\text{sgn}(e),
\end{equation}
which provides asymptotic convergence of the tracking error. The desired control policy is an ideal framework, but its solution unknown due to the use of the unknown function $f(x)$. The goal is to design a controller that can learn this desired control policy online with a guarantee of stability.

\section{Controller Design}\label{label: CNN declare}

This section presents a CNN-based adaptive controller. 
The CNN architecture $\Phi:\mathbb{R}^{n_0\times m_0}\times \mathbb{R}^{p_0\times m_0\times q_0}\times \mathbb{R}^{q_0} \times \cdots \times \mathbb{R}^{p_{k_c}\times m_{k_c}\times q_{k_c}}\times \mathbb{R}^{q_{k_c}}\times \mathbb{R}^{(l_0+1)\times l_1}\times \cdots \times \mathbb{R}^{(l_{k_f}+1)\times l_{k_f+1}} \to \mathbb{R}^{l_{k_f+1}}$ illustrated in Fig.~\ref{fig: CNN structure} consists of the CVLs, concatenate layer and FCLs.

\subsection{Convolutional Layers}
The CVLs are represented recursively as
\begin{equation}
    \Phi^C_{j_c} = 
    \begin{cases}
        O(\phi^C_{j_c}(\Phi^C_{j_c-1}),\Omega_{j_c},B_{j_c}),&
        j_c\in[1,\dots,k_c]\\
        O(\phi^C_0(X),\Omega_0,B_0),&j_c=0
    \end{cases},
\end{equation}
where $X\in\mathbb{R}^{n_0\times m_0}$ denotes the network input matrix, $O:\mathbb{R}^{n_{{j_c}}\times {m_{j_c}}} \times \mathbb{R}^{p_{j_c}\times m_{j_c} \times q_{j_c} }\times \mathbb{R}^{q_{j_c}}\to \mathbb{R}^{n_{j_c+1}\times m_{j_c+1}}$ denotes the CNN operator (see Appendix A), and $\phi^C_{j_c}:  \mathbb{R}^{n_{j_c}\times m_{j_c}} \to  \mathbb{R}^{n_{j_c}\times m_{j_c}}$ denotes the matrix activation function (i.e. $ {\phi^C_{j_c}}(\Phi^C_{j_c-1})_{(i,j)}= \sigma_{j_c}({\Phi^C_{j_c-1}}_{(i,j)}))$ for some activation functions $\sigma_{j_c}:\mathbb{R}\to\mathbb{R}$ .
The first activation function $\phi^C_0$ should be a bounded nonlinear function to ensure that the input to the first CNN operator is bounded. In this study, $\alpha_1\tanh(\cdot)$ is selected with $\alpha_1\in\mathbb{R}_{>0}$.
The filter set $\Omega_{j_c}$ contains $q_{j_c}$ filters $W^{(i)}_{j_c}\in\mathbb{R}^{p_{j_c}\times m_{j_c}},\ \forall i\in[1,\dots, q_{j_c}]$ and is represented as $\Omega_{j_c} = \{W^{(1)}_{j_c},\dots,W^{(q_{j_c})}_{j_c}\}$, where superscript $i$ denotes the filter index. The bias vector $B_{j_c}$ consists of $q_{j_c}$ biases. 

\subsection{Fully-Connected Layers}

The output matrix of the CVLs is input to the concatenate layer $C(\Phi^C_{k_c}) = [\text{vec}({\Phi^C_{k_c}}^\top  )^\top  ,1]^\top$ before the input layer of FCLs for compatibility between the CVLs and FCLs.

The FCLs are represented recursively as 
\begin{equation}
    \Phi^F_{j_f} =
        \begin{cases}
            V^\top  _{j_f}\phi^F_{j_f}(\Phi^F_{j_f-1}),  & j_f\in[1,\dots,k_f]   \\
            V_0^\top   C(\Phi^C_{k_c}),                  & j_f=0
        \end{cases},
\end{equation}
where $V_{j_f}\in\mathbb{R}^{l_{j_f}+1\times l_{j_f+1}}$ denotes the weight matrix, $\phi^F_{j_f}:\mathbb{R}^{l_{j_f}}\to\mathbb{R}^{l_{j_f+1}}$ denotes the vector activation function defined by
$\phi^F_{j_f} \triangleq \phi^F_{j_f}(\Phi^F_{j_f-1})=[\sigma_{j_f}({\Phi^F_{j_f-1}}_{(1)}),\sigma_{j_f}({\Phi^F_{j_f-1}}_{(2)})\cdots,\sigma_{j_f}({\Phi^F_{j_f-1}}_{(l_{j_f})}),1]^\top$ for some nonlinear activation functions $\sigma_{j_f}:\mathbb{R}\to\mathbb{R}$ such as $\tanh(\cdot)$. Note that $C(\Phi^C_{k_c})$ and $\phi^F_{j_f}$ are augmented by 1 to consider the bias of the preceding input layer as a weight in the weight matrix. Hereafter, weights refer to the trainable variables in filters, bias vectors, and weight matrices for simplicity.

The output of the FCLs is the final output of the CNN architecture and is represented as
\begin{equation}
    \Phi(X,\Omega_0,B_0,\dots,\Omega_{k_c},B_{k_c},V_0,\cdots,V_{k_f}) \triangleq \Phi^F_{k_f}.
\end{equation}

\subsection{Control Law Development} \label{Coltrol Law Development}

In the controller, the CNN architecture is designed to approximate the lumped term 
$\Lambda\triangleq f_c(x)-\dot x_d+A_cx_d$ in the desired control policy \eqref{eq: desired input}.
According to the universal approximation theorem \cite{UnivCNN}, there exist ideal bounded network weights
$\Omega^*_{j_c},B^*_{j_c},\ \forall j_{c}\in [0,\dots,k_c]$ and $V^*_{j_f},\ \forall j_f\in [0,\dots,k_f]$, 
such that $||\Lambda- \Phi^*|| \le \bar \epsilon$
where $\Phi^*\triangleq \Phi(X,\Omega_0^*,B_0^*,V_0^*,\dots,\Omega_{k_c}^*,B_{k_c}^*,V_{k_f}^*)$ for a positive constant $\bar\epsilon$. In other words, there exist some positive constants $\bar\Omega$, $\bar B$ and $\bar V$, such that $||\Omega^*_{j_c}||\le \bar\Omega$, $|| B^*_{j_c}||\le \bar B$ and $|| V^*_{j_f}||\le \bar V$ for all $j_c\in[0,\dots,k_c]$, $j_f\in[0,\dots,k_f]$.

Then, the lumped term $\Lambda$ can be approximated as follows:
\begin{align}        
    u^* = -\Phi^* -\epsilon -k_s\text{sgn}(e),
\end{align}
where $\epsilon$ is the approximation error such that $||\epsilon||\le\bar\epsilon$.  
The CNN architecture adapts its weights to converge to the ideal values while used in the control input as 
\begin{equation}
u = -\hat\Phi - k_s\text{sgn}(e), \label{eq: proposed ctrl}
\end{equation}
where $\hat\Phi\triangleq \Phi(X,\hat\Omega_0,\hat B_0,\hat V_0,\dots,\hat\Omega_{k_c},\hat B_{k_c},\hat V_{k_c})$ denotes the CNN architecture with weights being adapted.
The control input $u$ results in the following error dynamics:
\begin{equation}
    \dot e = A_c e-k_s\text{sgn}(e)+\tilde \Phi +\epsilon,
    \label{eq: error dynamics with NN error}
\end{equation}
where $\tilde\Phi\triangleq \Phi^*-\hat\Phi$. Note that the last two terms on the right side have been added compared to \eqref{eq: ideal error dynamics}.

\subsection{Network Input Matrix Design}

Any 2D image data can be used as the input matrix of this CNN architecture if the data include sufficient information for estimating the system dynamics. For instance, concatenated camera images and time-stacked system input and output data constitute sufficient datasets. This study adopts the latter, and the resulting input matrix is defined as
\begin{equation}
    X(t)\triangleq[\xi(t), \xi(t-T_s), \cdots, \xi(t-({n_0}-1) T_s)]^\top\in\mathbb{R}^{n_0\times m_0},
\end{equation}
where $\xi=\alpha_2[e^\top  ,x^\top  ,u^\top  ]^\top  \in\mathbb{R}^{2n+m}$, $\alpha_2$ denotes a positive constant, $T_s\in\mathbb{R}_{>0}$ denotes the data stacking time, and $n_0$ denotes the number of stacks.
Parameter $\alpha_2$ prevents $X$ from being saturated or distorted by the activation functions.
\section{Weight Adaptation Law}

Jacobians with respect to the CVLs and FCLs are derived first, followed by a derivation of the weight adaptation law based on the gradient descent optimization method using the Jacobians.

\subsection{Jacobians with respect to Convolutional Layers}\label{subsec: Jacobian CNN}

Let $\Phi_i$ denote the $i\textsuperscript{th}$ output of $\Phi$.
Then, the Jacobians of $\Phi_i$ with respect to the weights of CVLs are derived as
\begin{equation}
    \begin{aligned}
        {\partial \Phi_i\over\partial W^{l_k}_{j_c}} 
        &=
        \sum_{l_i=1}^{n_{j_c+1}}
        \bigg(
        {\partial \Phi_i\over\partial {\Phi^C_{j_c}}_{(l_i,l_k)}} 
            \text{row}_{(l_i:l_i+p_{j_c}-1)}(\phi^C_{j_c})
        \bigg)
        \\
         &\hspace{90pt}\text{with }j_c\in[0,\dots,k_c],\hfilneg\\        
        {\partial \Phi_i\over\partial {B_{j_c}}_{(l_k)}} &=
        \sum_{l_i=1}^{n_{j_c+1}} 
        \bigg(
            {\partial \Phi_i\over\partial {\Phi^C_{j_c}}_{(l_i,l_k)}} \cdot 1
        \bigg)
        \\
         &\hspace{90pt}\text{with }j_c\in[0,\dots,k_c],\hfilneg
    \end{aligned}
\end{equation}
for all $l_k\in[1,\cdots,q_{j_c}]$, $i\in[1,\dots,l_{k_f+1}]$, where $\phi^C_{j_c}\triangleq\phi^C_{j_c}(\Phi^C_{j_c-1})$ for $j_c\in[1,\dots,k_c]$, $\phi^C_0 \triangleq \phi^C_0(X)$, and $\partial \Phi_i/\partial \Phi^C_{j_c}$ denotes the backpropagated gradient of $\Phi_i$ with respect to $\Phi^C_{j_c}$, which is obtained using the back-propagation method. The details of the Jacobian calculation are provided in Appendix B.
\subsection{Jacobians with respect to Fully-connected Layers}\label{subsec: Jacobian FCN}

The Jacobians of $\Phi$ with respect to the weights of FCLs were derived in \cite{DixonDNN} as
\begin{equation}
    \begin{aligned}
    \frac{\partial \Phi}{\partial \text{vec}( V_0)} &= 
        (\overset{\curvearrowleft}\prod^{k_f}_{l=1}  V^\top  _l{\phi^{F}_l}') ( I_{l_{1}} \otimes C(\Phi^C_{k_c})),\\
    &\hspace{90pt}\text{with }{j_f} =0, \hfilneg
    \\
    {\partial \Phi\over\partial \text{vec}(V_{j_f})} &=  
        (\overset{\curvearrowleft}\prod^{k_f}_{l={j_F}+1} V^\top  _l{\phi^{F}_l}') ( I_{l_{{j_f}+1}} \otimes {\phi_{j_f}^F}^\top  ), \\
    &\hspace{60pt}\text{with } {j_f} \in[1,\dots,k_f], \hfilneg
    \end{aligned}
\end{equation}
where ${\phi^F_{j_f}}'\triangleq {\partial \over \partial x} {\phi^F_{j_f}}(x)$ is the Jacobian of the activation function with respect to some vector $x$. 

\subsection{Derivation of Adaptation Laws} 

For simplicity, the weights of the FCLs and CVLs are expressed as column vectors:
$\theta_F \triangleq [\text{vec}( V_{0});\cdots;\text{vec}( V_{k_f})]$ and $\theta_C \triangleq [\text{vec}(W^{(1)}_0);\text{vec}(W^{(2)}_0);\cdots;\text{vec}(W^{(q_{k_c})}_{k_c});B_0;\cdots;B_{k_c}]$, respectively. Define the weight vector including all the weights as $\theta\triangleq [\theta_F;\theta_C]\in\mathbb{R}^\Xi$, where $\Xi\triangleq\sum_{j_f=0}^{k_f} (l_{j_f}+1) l_{j_f+1} + {\sum_{j_c=0}^{k_c} (p_{j_c}m_{j_c}+1)q_{j_c}}$ denotes the length of $\theta$. 
Then, the Jacobian of $\Phi$ with respect to the weight vector is represented as
\begin{equation}
    \Phi'\triangleq 
    {\partial \Phi\over\partial \theta} = 
    \begin{bmatrix}
        \dfrac{\partial \Phi}{\partial \theta_F} & \dfrac{\partial \Phi}{\partial \theta_C}
    \end{bmatrix}
    \in\mathbb{R}^{l_{k_f+1}\times \Xi},
\end{equation}  
where
\begin{equation}
    \begin{aligned}
        {\partial \Phi\over\partial  \theta_F} = &
        \begin{bmatrix}
        \dfrac{\partial \Phi}{\partial \text{vec}( V_0)}
        & \cdots &
        \dfrac{\partial \Phi}{\partial \text{vec}( V_{k_f})}    
        \end{bmatrix}
        \\
        {\partial \Phi\over\partial \theta_C} = &
        \begin{bmatrix}
        \dfrac{\partial \Phi_1}{\partial  \theta_C}^\top  
        & \cdots &
        \dfrac{\partial \Phi_{l_{k_f+1}}}{\partial  \theta_C}^\top      
        \end{bmatrix}^\top.
    \end{aligned}
\end{equation}

Consider a convex objective function $J={1\over 2}e^\top  e$, which is defined to minimize the tracking error. The weight adaptation law is derived by employing the gradient descent optimization method to minimize $J$, motivated by \cite{BookEKcontrol}, \cite{BoockEKestimation}. The gradient of $J$ with respect to the $\hat\theta$ is 
\begin{equation} 
    {\partial J\over\partial \hat\theta} = 
    \bigg(
    {\partial J\over\partial e}{\partial e\over\partial \hat\theta}
    \bigg)^\top   
    =
    \bigg(
    e^\top  {\partial e\over\partial \hat\theta}
    \bigg)^\top  
    ,
\end{equation}
where $\hat\theta\triangleq [\hat\theta_F;\hat\theta_C]$ denotes the estimate of $\theta^*$. 
By applying a static approximation (i.e., $\dot e = 0$) to \eqref{eq: error dynamics with NN error} and
, the term $\partial e/\partial \hat\theta$ can be calculated as
\begin{equation}
    {\partial e\over\partial \hat\theta} = A_c^{-1} \hat\Phi',
\end{equation}
where $\hat\Phi'\triangleq \partial\hat\Phi/\partial\hat\theta$.
Finally, the weight adaptation law is proposed as 
\begin{equation}
        \dot{\hat \theta} =
            \text{proj}
            \bigg[
            -\Gamma 
            (A_c^{-1}\hat\Phi')^\top   e 
            -\rho||e|| \hat\theta
            \bigg]
        ,
    \label{eq: adaptation law}
\end{equation}
where $\Gamma\in\mathbb{R}^{\Xi\times \Xi}$ denotes the learning rate matrix which is symmetric and positive-definite and $\rho\in\mathbb{R}_{>0}$ is the damping factor which corresponds to the e-modification term presented in \cite{BookEKcontrol}, \cite{BoockEKestimation}. 

The estimated weight vector $\hat\theta$ remains in set $\Theta_{\theta}=\{\theta \ |\  ||\theta||\le\bar\theta\}$ (i.e., $\text{sup}_{\hat\theta\in\Theta_{\theta}} ||\hat\theta|| \le \bar \theta\in L_\infty$) by the projection operator (Appendix E, Eq.~(E.4) in \cite{BookProjection}).
As mentioned in Section \ref{Coltrol Law Development}, the ideal weights are bounded, it follows that ideal weight vector $\theta^*\triangleq[\theta^*_F;\theta_C^*]$ is also bounded by a positive constant $\bar\theta$ such that $||\theta^*||\le\bar\theta$. 
Therefore, weights estimate error $\tilde\theta\triangleq\theta^*-\hat\theta$ is also bounded.

Moreover, there exists a positive constant $\Phi'_M$ such that $||\hat\Phi'||\le\Phi'_M$, because the activation functions and their gradients are bounded for some bounded inputs, the input matrix $X$ and the weight estimates are bounded. 

\section{Stability Analysis}

The estimation error of the CNN architecture is expressed as follows using the first-order Taylor series approximation:
\begin{align}\nonumber
        \tilde \Phi =& 
            \Phi(X,\theta^*)-\Phi(X,\hat\theta)
        \\
            =&\hat\Phi'\tilde \theta+\mathcal{O}(||\tilde \theta||^2),
    \label{eq: tayor approx}
\end{align}
where $\mathcal{O}(\cdot)$ denotes a higher-order error.
The error dynamics \eqref{eq: error dynamics with NN error} are reexpressed as follows using the relationship \eqref{eq: tayor approx}:
\begin{equation}
    \dot e = A_c e - k_s\text{sgn}(e)+\hat\Phi'\tilde \theta +\Delta,
\end{equation}
where $\Delta \triangleq \epsilon + \mathcal{O}(||\tilde\theta||^2)\le \bar\Delta\in L_\infty$ denotes a lumped error. 
The following theorem establishes the asymptotic tracking error convergence of the proposed controller.
\begin{theorem}
For the dynamical system in \eqref{eq: model dynamics}, the proposed controller in \eqref{eq: proposed ctrl} and adaptation law \eqref{eq: adaptation law} ensure asymptotic tracking error convergence, in the sense that $e\to0$ as $t\to \infty$, provided that $\beta_1\beta_2^2+ \bar\Delta \le k_s$ where 
$\beta_1=\rho||\Gamma^{-1}||$ and $\beta_2=\{\Phi'_M(||A_c^{-1}||+1)+\beta_1\bar\theta\}/2\beta_1$.

\end{theorem}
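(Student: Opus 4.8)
The plan is to establish stability through a composite Lyapunov function that couples the tracking error with the weight estimation error. Specifically, I would take $V = \tfrac12 e^\top e + \tfrac12 \tilde\theta^\top \Gamma^{-1}\tilde\theta$, which is positive definite and radially unbounded in $(e,\tilde\theta)$ because $\Gamma^{-1}$ is symmetric positive definite. Differentiating along the reexpressed error dynamics $\dot e = A_c e - k_s\,\text{sgn}(e) + \hat\Phi'\tilde\theta + \Delta$ and using $\dot{\tilde\theta} = -\dot{\hat\theta}$ from \eqref{eq: adaptation law} gives $\dot V = e^\top A_c e - k_s e^\top\text{sgn}(e) + e^\top\Delta + e^\top\hat\Phi'\tilde\theta - \tilde\theta^\top\Gamma^{-1}\dot{\hat\theta}$.

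First I would dispose of the unambiguously signed terms: $e^\top\text{sgn}(e) = \|e\|_1 \ge \|e\|$ contributes $-k_s\|e\|$, the symmetric negative-definite choice of the designer matrix $A_c$ makes $e^\top A_c e \le 0$, and $e^\top\Delta \le \bar\Delta\|e\|$. For the adaptation contribution I would invoke the standard projection property (Appendix E of \cite{BookProjection}), which guarantees $-\tilde\theta^\top\Gamma^{-1}\,\text{proj}[\cdot] \le -\tilde\theta^\top\Gamma^{-1}[\cdot]$, so the projected law only improves the bound and the analysis may proceed with the unprojected argument. Substituting it produces the cross term $e^\top A_c^{-1}\hat\Phi'\tilde\theta$ and the e-modification term $\rho\|e\|\tilde\theta^\top\Gamma^{-1}\hat\theta$.

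The crux is then a completion of squares in $\|\tilde\theta\|$. Combining the two approximation cross terms gives $e^\top(I + A_c^{-1})\hat\Phi'\tilde\theta \le (1+\|A_c^{-1}\|)\Phi'_M\|e\|\|\tilde\theta\|$, while writing $\hat\theta = \theta^* - \tilde\theta$ in the e-modification term splits it into the indefinite part $\rho\|e\|\tilde\theta^\top\Gamma^{-1}\theta^* \le \beta_1\bar\theta\|e\|\|\tilde\theta\|$ and the damping part $-\rho\|e\|\tilde\theta^\top\Gamma^{-1}\tilde\theta$, which supplies the quadratic $-\beta_1\|e\|\|\tilde\theta\|^2$. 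The coefficient linear in $\|\tilde\theta\|$ collects exactly to $(1+\|A_c^{-1}\|)\Phi'_M + \beta_1\bar\theta = 2\beta_1\beta_2$, so that $-\beta_1\|e\|\|\tilde\theta\|^2 + 2\beta_1\beta_2\|e\|\|\tilde\theta\| = -\beta_1\|e\|(\|\tilde\theta\|-\beta_2)^2 + \beta_1\beta_2^2\|e\|$. This is the step I expect to be the main obstacle: the sign-indefinite contribution of $\hat\Phi'\tilde\theta$ cannot be cancelled, so the e-modification damping must be tuned to dominate it, and getting the constants to align with $\beta_1$ and $\beta_2$ is where the tuning condition originates.

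Collecting everything leaves $\dot V \le -(k_s - \bar\Delta - \beta_1\beta_2^2)\|e\| - \beta_1\|e\|(\|\tilde\theta\|-\beta_2)^2$. The hypothesis $\beta_1\beta_2^2 + \bar\Delta \le k_s$ makes the first coefficient nonnegative and the second term nonpositive, hence $\dot V \le 0$; thus $V \in L_\infty$ and both $e$ and $\tilde\theta$ are bounded. To upgrade boundedness to asymptotic convergence I would close with Barbalat's lemma: boundedness of $e$, $\tilde\theta$, $\hat\Phi'$, and $\Delta$ makes $\dot e$ bounded and $e$ uniformly continuous, while integrating $\dot V \le -(k_s-\bar\Delta-\beta_1\beta_2^2)\|e\|$ shows $\|e\| \in L_1$; therefore $e \to 0$ as $t \to \infty$. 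The only delicate points to verify are the discontinuity introduced by $\text{sgn}(e)$, handled in the usual Filippov/Barbalat sense, and the sign convention of the projection operator.
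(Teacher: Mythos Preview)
Your proposal is essentially the paper's proof: the same composite Lyapunov function $V=\tfrac12 e^\top e+\tfrac12\tilde\theta^\top\Gamma^{-1}\tilde\theta$, the same treatment of $\text{sgn}$ and the projection operator, the same completion of squares in $\|\tilde\theta\|$ that produces $-\beta_1\|e\|(\|\tilde\theta\|-\beta_2)^2+\beta_1\beta_2^2\|e\|$, and a Barbalat finish. The one slip is in the last step: you discard $e^\top A_c e$ early and rely on $-(k_s-\bar\Delta-\beta_1\beta_2^2)\|e\|$ to obtain $\|e\|\in L_1$, but the hypothesis allows equality $k_s=\beta_1\beta_2^2+\bar\Delta$, in which case that coefficient vanishes and your integral bound is vacuous; the paper instead keeps $e^\top A_c e\triangleq -W(e)$ as the negative-definite term driving Barbalat, which works regardless of whether the gain condition is strict. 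Retain $e^\top A_c e$ rather than dropping it and your argument coincides with the paper's.
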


\begin{proof}
Let $V:\mathbb{R}^{n+\Xi}\to\mathbb{R}_{>0}$ denote the candidate Lyapunov function defined as 
\begin{equation}
    V={1\over2}e^\top e+{1\over 2}\tilde \theta^\top  \Gamma^{-1}\tilde \theta.
\end{equation}
Using $\tilde x^T\hat x\le||\tilde x||(\bar  x - ||\tilde x||)$, $x^T\text{sgn}(x)=||x||_1\ge ||x||$ for $x\in\mathbb{R}^n$, \textit{Lemma E.1} in \cite{BookProjection}, and $\dot \theta^*=0$, the time-derivative of $V$ is derived as 
\begin{equation}
    \begin{aligned}
        \dot V 
        =
        &
        e^\top  A_c e + 
        e^\top   
        \bigg( -k_s\text{sgn}(e)+\hat\Phi'\tilde \theta+\Delta
        \bigg)
        \\
        &-\tilde\theta^\top   \Gamma^{-1} 
        \text{proj}
        \bigg[
            -\Gamma  \hat\Phi'^\top  {A_c^{-1}}^\top  e
            - \rho||e|| \hat \theta
        \bigg]
\\
\le &
        e^\top  A_c e -k_s ||e|| + ||e|| (\Phi'_M ||\tilde \theta||  +\bar\Delta ) 
\\      
        & + \Phi'_M ||\tilde \theta||   ||A_c^{-1}|| ||e|| 
        +   \rho||\Gamma^{-1}||||\tilde\theta|| ||e|| (\bar\theta - ||\tilde \theta||)
\\
\le&
        e^\top  A_c e 
        +||e|| \bigg\{ 
         -k_s+        \bar\Delta
\\
        &
        +||\tilde \theta|| \bigg(\Phi'_M (||A^{-1}||+1) + \rho||\Gamma^{-1}|| (\bar\theta - ||\tilde \theta||) \bigg)
        \bigg\}
\\
\le&
        e^\top  A_c e +
        ||e|| 
        \bigg\{
         -\beta_1(||\tilde \theta ||-\beta_2)^2 + \beta_1\beta_2^2 + \bar\Delta -k_s 
        \bigg\},
       \end{aligned}
       \label{eq: dot V}
\end{equation}
where $\beta_1\triangleq\rho||\Gamma^{-1}||$ and $\beta_2 \triangleq \{\Phi_M' (||A^{-1}||+1) + \beta_1\bar\theta\}/2\beta_1$ are positive constants.
By selecting $k_s$ such that $\beta_1\beta_2^2+\bar\Delta \le k_s $, \eqref{eq: dot V} yields 
\begin{equation}
    \begin{aligned}
        \dot V \le e^\top  A_c e  -\beta_1||e||(||\tilde\theta||-\beta_2)^2\le
        e^\top  A_c e \triangleq -W(e) 
    \end{aligned}.\label{eq: W}
\end{equation}
Inequality \eqref{eq: W} implies that $\int_{{t_0}}^t {W(e(\tau ))} \,d\tau $ is finite. By Barbalat's Lemma \cite{Khalil}, $W(e) \to 0$ as $t \to \infty$. Therefore, $e\to 0$ as $t\to\infty$. 
\end{proof}

\begin{remark}
The design parameters $n_0$ and $T_s$ of the input matrix $X$ play a crucial role in determining the resolution and size of the information provided to the CVLs. The smaller $T_s$ is, the higher the resolution becomes, while also limiting the temporal range of the information.
As $n_0$ increases, the input matrix possesses information for a longer time, which also results in higher computational costs.
\label{remark:stack param}
\end{remark}

\begin{remark}
As mentioned in Remark 3.3 of \cite{BookEKcontrol}, the stability and convergence speed in the learning phase are affected by design parameters $\Gamma, \rho,$ and $A_c$. 
Increasing the learning rate $\Gamma$ increases the convergence speed, but some oscillations can occur in the transient response. 
The damping factor $\rho$ can help the learning phase to be robust to NN approximation errors \cite{BookSSGe}, but the weights may converge far from the optimal weights. 
The further the $A_c$'s eigenvalues are from the imaginary axis in the negative direction, the faster the tracking error dynamics are, but the slower the weights are updated due to the use of $A_c^{-1}$ in the adaptation law \eqref{eq: adaptation law}.
\label{remark:2}
\end{remark}

\section{Simulations} \label{sec: sim}

Comparative simulations were performed to demonstrate the efficacy of the proposed CNN-based end-to-end adaptive controller and analyze its properties. 
An example for system \eqref{eq: model dynamics} was employed from \cite{DixonDNN} as 
$\dot x = f(x) + u$ with
\begin{equation}
    f(x)
     = 
    \begin{bmatrix}
        x_1x_2\tanh(x_2)+\text{sech}(x_1)\\
        \text{sech}^2 (x_1+x_2) -\text{sech}^2(x_2)
    \end{bmatrix},
    \label{eq: sim dynamics}
\end{equation}
where $x=[x_1,x_2]^\top  $, $u=[u_1,u_2]^\top$. 
The initial state value was $x(0)=[1,2]^\top  $, and the desired trajectory was $x_d(t) = [\sin(2t),-\cos(t)]^\top  $. 

Six controllers are compared: the proposed controller with default parameters (CNN1), four variations thereof (CNN2 to CNN5), and the comparison controller (DNN). CNN1 had the following control gain and design parameters: $k_s=1, \rho=10^5$, and $A_c=-10I$, where $I = [1,0;0,1]$. The CNN architecture of CNN1 was designed as follows: $k_f=2, k_c=1, T_s=0.1, n_0 = 10$, $\alpha_1 = 100$, $\alpha_2=0.01$; each FCL had $8$ nodes; the first CVL had $2$ filters whose dimensions were $(5\times 6)$; the second layer had $2$ filters whose dimensions were $(3\times2)$; the input matrix $X$ had the dimensions of $(10\times 6)$; all trainable variables were initialized from a uniform distribution in the interval $(-0.1,0.1)$ and not pretrained; and $\tanh(\cdot)$ was selected as the FCLs and CVLs's activation function.

CNN2 used a smaller stacking time value (i.e. $T_s=0.01$) compared to CNN1; CNN3 replaced the adaptation law of CNN1 with one from \cite{DixonDNN} (i.e., $A_c=-I, \rho = 0$); CNN4 used a larger damping factor value (i.e. $\rho=5\cdot10^5$) compared to CNN1; CNN5 employed eigenvalues that are more negative for $A_c$ (i.e. $A_c=-50I)$ compared to those used by CNN1; and DNN was defined by excluding the CVLs from the proposed CNN architecture. 
DNN had $4$ FCLs with $8,8,8$ and $4$ nodes 
using $\alpha_1\tanh(\xi(t)/\alpha_2)$ as the input vector. The other parameters were the same as CNN1. 
The numbers of nodes in layers were chosen as described above so that the total number of weights of DNN (250) was almost similar to those of CNN1 (244).

The simulation results are presented in Fig.~\ref{fig:sim}, with the tracking errors compared in Table~\ref{table: error norm}, where $e_{1(2)}$ and $\epsilon_{1(2)}$ denote the tracking error and root mean square error (RMSE) of $x_{1(2)}$, respectively. Notably, CNN1, CNN2, and CNN4 showed almost asymptotic convergence, which confirmed the main finding of this study. A detailed comparison is performed from two perspectives in the subsections below.

\begin{figure}%
    \centering
        \subfigure[]{\includegraphics[width=0.45\linewidth]{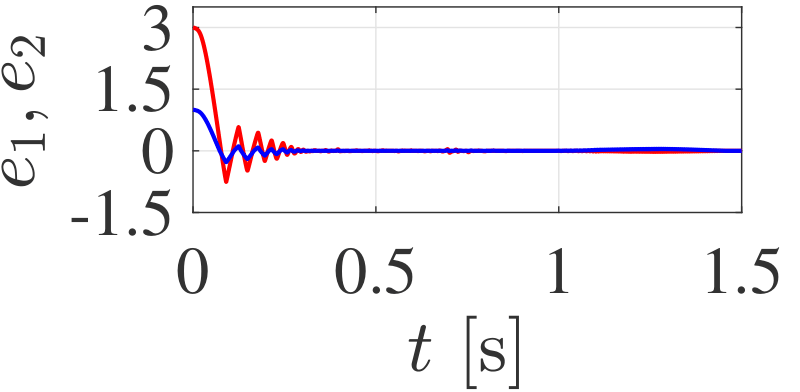}%
        \label{fig:CNN1}}
    \hfill
        \subfigure[]{\includegraphics[width=0.45\linewidth]{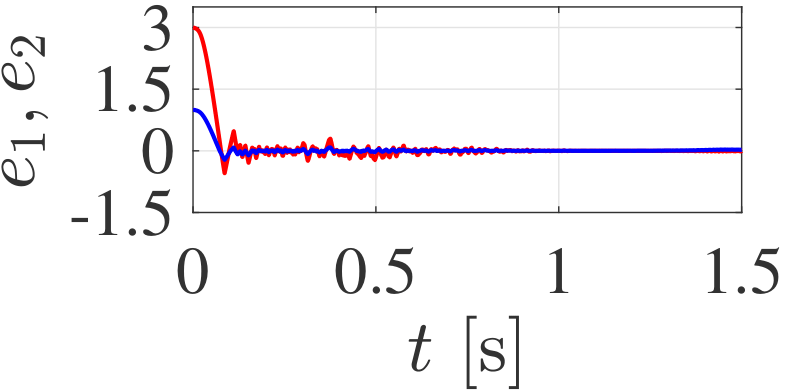}%
        \label{fig:CNN2}}
    \vfill
        \subfigure[]{\includegraphics[width=0.45\linewidth]{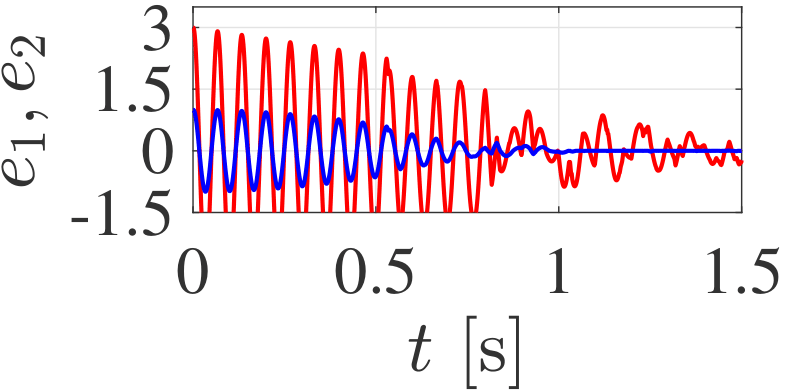}%
        \label{fig:CNN3}}
    \hfill
        \subfigure[]{\includegraphics[width=0.45\linewidth]{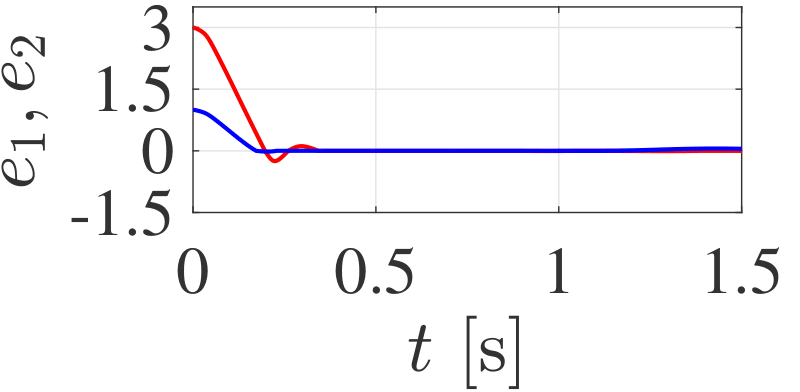}%
        \label{fig:CNN4}}
    \vfill
        \subfigure[]{\includegraphics[width=0.45\linewidth]{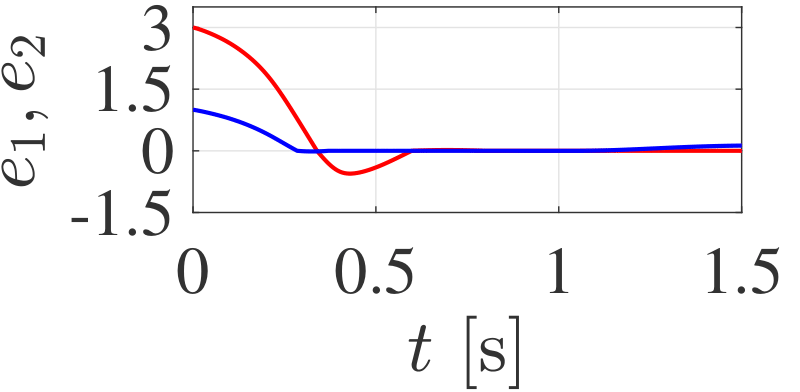}%
        \label{fig:CNN5}}
    \hfill
        \subfigure[]{\includegraphics[width=0.45\linewidth]{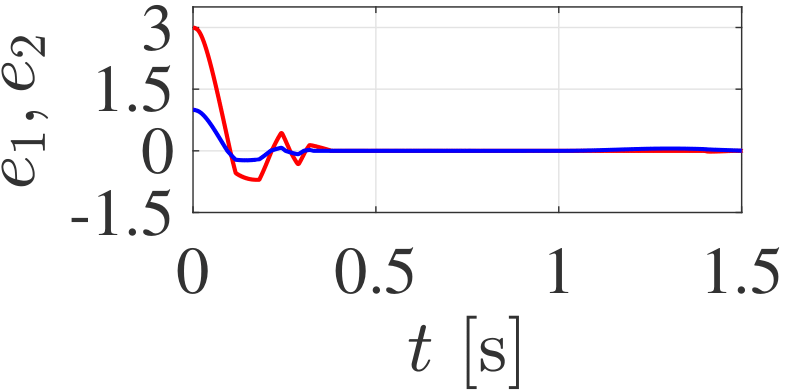}%
        \label{fig:DNN}}
    \caption{Tracking errors of (a) CNN1, (b) CNN2, (c) CNN3, (d) CNN4, (e) CNN5, and (f) DNN. The blue and red solid lines represent $e_1$ and $e_2$, respectively.}
    \label{fig:sim}
\end{figure}

\subsection{Effects of Design Parameters}
\begin{itemize}
\item Stacking time $T_s$: CNN2, with a smaller $T_s$, regulated the tracking errors faster than did CNN1 but provided noisy responses. This is because a small value of $T_s$ provides fine but shortsighted dynamic information, as mentioned in Remark~\ref{remark:stack param}.
\item Desinger matrix $A_c$ and damping factor $\rho$: CNN3, with a simpler adaptation law ($A_c = -I$ and $\rho = 0$), showed slow and oscillating convergence. In contrast, other CNNs using more negative $A_c$ and nonzero $\rho$ showed faster convergence than did CNN3 and almost no oscillating convergence. The larger $\rho$ was, the slower but more stable the convergence was (see CNN4 vs. CNN1 and CNN2). However, a more negative $A_c$ did not guarantee faster weight's convergence (see CNN5), as mentioned in Remark~\ref{remark:2}.
\end{itemize}

\begin{table}%
    \renewcommand{\arraystretch}{1.3}
    \caption{Quantitative Comparison of Tracking RMSE}
    \label{table: error norm}
    \centering
    \begin{tabular}{|c||c|c|c|}
    \hline
    & \bfseries CNN1 & \bfseries CNN2 & \bfseries CNN3 \\
    & (Default) & (Smaller $T_s$) & (Existing adaptation law) \\
    \hline 
    $\epsilon_1$ & 0.0397 & 0.0384 & 0.2160 \\
    \hline
    $\epsilon_2$ & 0.3752 & 0.3680 & 2.4030 \\
    \hline    
    & \bfseries CNN4 & \bfseries CNN5 & \bfseries DNN \\
    & (Larger $\rho$) & (More negative $A_c$) & (No CVLs) \\
    \hline 
    $\epsilon_1$ & 0.0716 & 0.1291 & 0.0490 \\
    \hline
    $\epsilon_2$ & 0.7524 & 1.5446 & 0.4757 \\
    \hline
    \end{tabular}
    \label{table: erro}
\end{table}

\subsection{CNN vs. DNN}

The strength of the proposed CNN-based controller is derived from utilizing image data containing system historical information, which is useful for approximating a dynamic function of $f_c(x)-\dot x_d+A_cx_d$ in \eqref{eq: desired input}. However, the DNN-based controller uses only the current observation and is considered a state feedback controller that has a limited capability to approximate the dynamical function. %

The difference in performance between DNN and CNN1 was not significant in the simulation results provided in Fig.~\ref{fig:sim} and Table~\ref{table: erro}, although DNN had slightly larger errors than did CNN1. However, the difference can be highlighted by a case study where the system changed suddenly at $t=3$ from \eqref{eq: sim dynamics} to $\dot x = f(x)+g(x)+u$, with 
\begin{equation}
    g(x) = 
	\begin{bmatrix}
		2x_1^2x_2 + 2\sin(t)+20\\
		2x_2^2\tanh(x_1)+2\cos({1\over2}t)+20
	\end{bmatrix}.
\end{equation}
The case study results, provided in Fig.~\ref{fig: FCN vs CNN}, demonstrated that CNN1 learned the new desired control policy (which depended on the time-dependent functions) faster than DNN did.

\begin{figure}[!t]
    \centering
    \includegraphics[width=\linewidth]{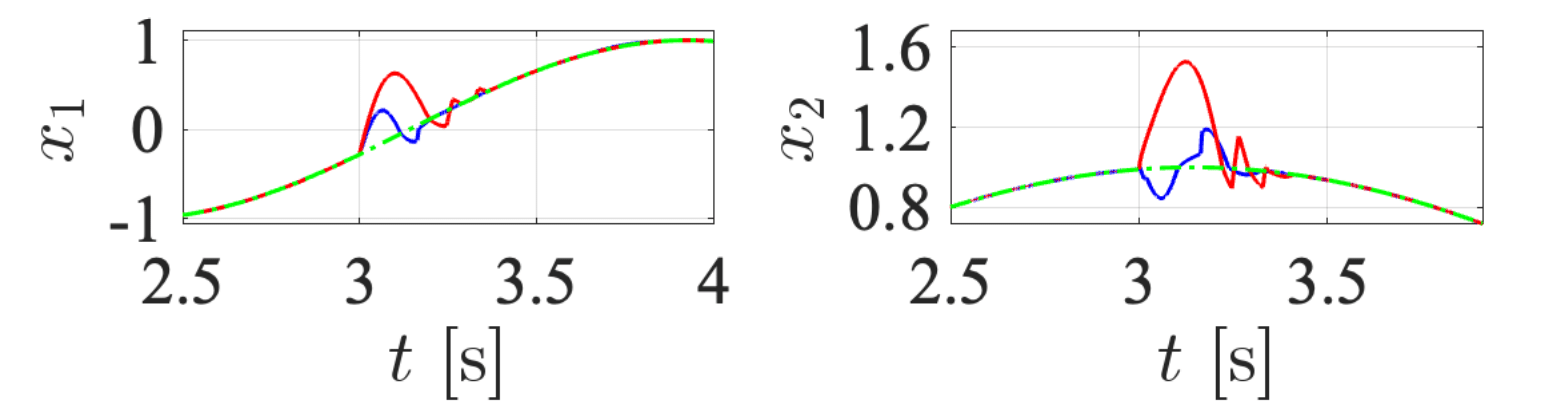}
    \caption{Control results of CNN1 (blue solid line) and DNN (red solid line) under a sudden change in the system at $3$ s. The green dash-dotted line denotes the desired trajectory.}
    \label{fig: FCN vs CNN}
\end{figure}

\section{Conclusion and Future Work}
This letter presents a CNN-based end-to-end adaptive controller for tracking control of uncertain control-affine nonlinear systems. The main contributions are threefold: 1) to present an analytical expression of the CNN architecture, 2) to derive an adaptation law based on gradient descent optimization for the CNN architecture, and 3) to prove the stability of the adaptation law, based on Lyapunov analysis. A simulation study demonstrated that the proposed controller learned the desired control policy and provided asymptotic convergence without using pretrained data. 
Future work would be development of other CNN variations-based controller with stability guarantee.

\section*{APPENDIX}

\subsection{CNN Operator}\label{A. CNN Operator}

The CNN operator $O(\cdot)$ can be represented as 
\begin{equation}
    O(X,\Omega,B)_{(i,j)} = W^{(j)} \odot \text{row}_{(i:i+p-1)} (X) + B_{(j)},
\end{equation}
where $X\in\mathbb{R}^{n\times m}$, $\Omega \in\mathbb{R}^{p\times m\times q}$, and $B\in\mathbb{R}^{q}$ denote the input matrix, filter set and bias vector, respectively.
The filter set $\Omega=\{W^{(1)},\dots,W^{(q)}\}$ has $q$ filters $W^{(i)}\in\mathbb{R}^{p\times m},\ \forall i\in[1,\dots, q]$. 
Note that the operator's output is also bounded if $X$, $\Omega$, and $B$ are bounded.

\subsection{Jacobians with respect to Convolutional Layers}\label{A. CNN Jacobian}

The Jacobian of $\Phi$ with respect to the concatenate layer is represented as
\begin{equation}
    {\partial \Phi\over\partial C(\Phi^C_{k_c})} = 
    (\overset{\curvearrowleft}\prod^{k_f}_{l=1} 
    {\hat V}^\top  _l {\hat\phi}_l^{F'}  ) V_0^\top  .
    \label{eq. dPhi/dC}
\end{equation}
The Jacobian of $\Phi_i$ with respect to the CVLs' output is
\begin{equation}
    {\partial \Phi_i\over\partial \Phi^C_{k_c}} = 
    \text{reshape}
    \bigg(
    {\partial \Phi_i\over\partial  C(\Phi^C_{k_c})}
    _{(1:n_{k_c}m_{k_c})}
    ,n_{k_c},m_{k_c}
    \bigg).
    \label{eq, dC/dPhi to dPhidPhi}    
\end{equation}
The Jacobians of $\Phi_i$ with respect to the $j_c\textsuperscript{th}$ activation function of the CVLs can be represented as
\begin{equation}
    \begin{aligned}
        {\partial \Phi_i\over\partial \phi^{C}_{j_c}} =&
        \sum_{l_i=1}^{n_{j_c+1}} \sum_{l_j=1}^{m_{j_c+1}} 
        \bigg(
            { \partial \Phi_i\over\partial {\Phi^C_{j_c}}_{(l_i,l_j)} }
            {\partial {\Phi^C_{j_c}}_{(l_i,l_j)}\over\partial \phi^{C}_{j_c}}
        \bigg)
        \\=&
        {\partial\Phi_i\over\partial {\Phi^C_{j_c}}_{(1,1)}}
        \begin{bmatrix}
            W_{{j_c}}^{(1)}\\ 0_{{(n_{{j_c}}}-p_{{j_c}})\times m_{{j_c}}}
        \end{bmatrix}
        \\&
        +
        {\partial\Phi_i\over\partial {\Phi^C_{j_c}}_{(2,1)}}
        \begin{bmatrix}
            0_{(n_{{j_c}}-p_{{j_c}})\times 1}\\ 
            W_{{j_c}}^{(1)}\\
            0_{({n_{{j_c}}}-p_{{j_c}})\times (m_{{j_c}}-1)}
        \end{bmatrix}
        \\
        &+\cdots
        +{\partial\Phi_i\over\partial {\Phi^C_{j_c}}_{(n_{j_c+1}, m_{j_c+1})}}
        \begin{bmatrix}
            0_{{(n_{{j_c}}}-p_{{j_c}})\times m_{{j_c}}}\\ \\ 
            W_{{j_c}}^{(q_{j_c})}
        \end{bmatrix}
        \end{aligned}
\end{equation}
for all $j_c\in[1,\dots,k_c]$, where $\partial \Phi_i/\partial \Phi^C_{j_c}$ is the Jacobians of $\Phi_i$ with respect to $\Phi^C_{j_c}$ and represented as
\begin{equation}
    {\partial \Phi_i\over\partial {\Phi^C_{j_c}}} = {\partial {\Phi_i}\over\partial \phi^{C}_{j_c+1}}\odot{\phi^{C}_{j_c+1}}'({\Phi^C_{j_c}})
\end{equation}
for all $j_c\in[0,\dots,k_c]$.
Finally, the Jacobians of $\Phi_i$ with respect to the $l_k\textsuperscript{th}$ filters and bias vectors of the $j\textsuperscript{th}$ CVL can be obtained as
\begin{equation}
    \begin{aligned}
        {\partial \Phi_i\over\partial W^{l_k}_{j_c}} 
        &=
        \sum_{l_i=1}^{n_{j_c+1}}
        \bigg(
        {\partial \Phi_i\over\partial {\Phi^C_{j_c}}_{(l_i,l_k)}} 
            \text{row}_{(l_i:p_{j_c}+l_i-1)}(\phi^C_{j_c})
        \bigg)
        \\
         &\hspace{90pt}\text{with }j_c\in[0,\dots,k_c],\hfilneg\\        
        {\partial \Phi_i\over\partial {B_{j_c}}_{(l_k)}} &=
        \sum_{l_i=1}^{n_{j_c+1}} 
        \bigg(
            {\partial \Phi_i\over\partial {\Phi^C_{j_c}}_{(l_i,l_k)}} \cdot 1
        \bigg)
        \\
         &\hspace{90pt}\text{with }j_c\in[0,\dots,k_c],\hfilneg
    \end{aligned}
    \label{eq. dPhidOm}
\end{equation}
for all $j_c\in[0,\dots,k_c]$ and $l_k\in[1,\cdots,q_{j_c}]$.

\bibliographystyle{ieeetr}
\bibliography{refs}

\end{document}